\documentclass[11pt,a4paper]{article}
\usepackage[T1]{fontenc}
\usepackage{lmodern}
\usepackage{amsmath,amssymb,amsthm}
\usepackage{fullpage}
\usepackage{tabularx}
\usepackage[numbers,sort&compress]{natbib}
\usepackage[hidelinks]{hyperref}
\usepackage{doi}
\usepackage[capitalise,noabbrev]{cleveref}
\newtheorem{theorem}{Theorem}
\newtheorem{lemma}{Lemma}
\DeclareMathOperator{\conv}{conv}
\DeclareMathOperator{\enc}{enc}
\newcommand{\R}{\mathbb R}
\newcommand{\Z}{\mathbb Z}
\newcommand{\boxproblem}[4]{%
  \begin{center}
  \fbox{\begin{minipage}{0.97\linewidth}
    \vspace{2pt}\noindent\normalsize\textsc{#1}\par\vspace{2pt}
    \setlength{\tabcolsep}{3pt}\renewcommand{\arraystretch}{1.0}
    \begin{tabularx}{\linewidth}{@{}lX@{}}
      \textbf{Input:}& #2\\
      \textbf{Question:}& #3\\
      \textbf{Parameter:}& #4
    \end{tabularx}
  \end{minipage}}
  \end{center}}
\title{High-Multiplicity Bin Packing is FPT}
\author{Tomohiro Koana\thanks{Graduate School of Information Science and Technology, The University of Tokyo, Japan.\newline Email: \texttt{tomohiro.koana@gmail.com}.}
\and Soh Kumabe\thanks{CyberAgent, Inc., Tokyo, Japan. Email: \texttt{kumabe\_soh@cyberagent.co.jp}.}}
\date{}
\begin{document}
\maketitle

\begin{abstract}
Bin packing asks whether a collection of items can be packed into at most a
given number of bins of a given capacity.
We consider the high-multiplicity setting with $d$ distinct item sizes, in
which both the item sizes and the number of items of each size are encoded
in binary.
\citeauthor{GR20} (JACM~\citeyear{GR20}) gave an XP algorithm parameterized by
$d$.
Whether this problem is fixed-parameter tractable (FPT) in $d$ has remained a
central open problem.

We resolve this question by giving a deterministic $O^*(2^{d^{O(d)}})$-time
algorithm.
We formulate bin packing as an integer linear program (ILP) with at most
$(d+1)d^d$ variables.
A bin configuration records the number of items of each type in one bin.
We partition these configurations by their coordinate remainders modulo $d$.
For each class, we use one variable for the bin count and $d$ variables for
the total item counts.
The convex hull of each class has the integer decomposition property,
which guarantees that every feasible ILP solution corresponds to a packing.
\end{abstract}

\section{Introduction}\label{sec:introduction}
Bin packing is a classical problem in combinatorial optimization: given item
sizes and a bin capacity, the goal is to pack all items into as few bins as
possible.
NP-hardness already follows from the two-bin case, which contains
\textsc{Partition}, one of the 21 problems proved NP-complete by Karp~\cite{Karp72}.
We consider the high-multiplicity version, in which each item size is given
together with its multiplicity.
\boxproblem{Bin Packing}%
{A capacity $B\in\Z_{\ge1}$, $d\ge1$ item sizes $s_1,\ldots,s_d\in\Z_{\ge1}$,
 multiplicities $a_1,\ldots,a_d\in\Z_{\ge0}$, and a bin bound $b\in\Z_{\ge0}$.}%
{Can $a_i$ items of size $s_i$ for each $1\le i\le d$ be packed into at most
 $b$ bins of capacity $B$?}%
{$d$}
The question of whether \textsc{Bin Packing} is fixed-parameter tractable (FPT)
in $d$ has been asked repeatedly in the
literature~\cite{GR20,MB18,KLMPS24,KZ25,JKZ25,JOP26}.

Bin packing with few item types has been studied extensively.
McCormick, Smallwood, and Spieksma~\cite{MSS01} gave a polynomial-time
algorithm for $d=2$ in 2001.
Goemans and Rothvoss~\cite{GR20} subsequently established polynomial-time
solvability for every fixed $d$ in 2014.
Their algorithm runs in $L^{2^{O(d)}}$ time, where $L$ denotes the binary
encoding length of the input.
The exponent in $L^{2^{O(d)}}$ depends on $d$, so the algorithm is XP.

Jansen and Solis-Oba~\cite{JSO11} gave an additive-one approximation that uses
at most one bin more than the optimum and runs in $2^{2^{O(d)}}L^{O(1)}$ time.
Jansen and Klein~\cite{JK20} gave an algorithm running in
$|V|^{2^{O(d)}}L^{O(1)}$ time, where $V$ is the vertex set of the convex hull
of feasible vectors that record how many items of each type are packed in one bin.
However, neither result shows that \textsc{Bin Packing} is FPT in $d$ alone:
the approximation permits an extra bin, while $|V|$ is not bounded by a
function of $d$ alone.

Variants in which either the item sizes or the multiplicities are encoded in
unary have also been studied.
For unary-encoded sizes $s_i$ and binary-encoded multiplicities $a_i$, the
Goemans--Rothvoss framework~\cite{GR20}, with the refinements of Jansen, Kahler,
and Zwanger~\cite{JKZ25}, yields an FPT algorithm in $d$ with running time
$2^{2^{O(d)}}$ times a polynomial in the input length.
For unary-encoded multiplicities $a_i$ and binary-encoded sizes $s_i$,
Kouteck\'y and Zink~\cite{KZ25} gave an FPT algorithm in $d$ with running time
$2^{2^{O(d)}}$ times a polynomial in the input length.
However, when both the item sizes $s_i$ and the multiplicities $a_i$ are
simultaneously encoded in binary, FPT in $d$ alone remained open.

Under the Exponential Time Hypothesis (ETH), Kowalik et al.~\cite{KLMPS24}
ruled out running time $N^{2^{o(q)}}$ for \textsc{Point in Cone} in ambient
dimension $q$, where $N$ is the binary input length.
This problem asks whether a given integer point is a nonnegative integer
combination of integer points in a given bounded rational polytope.
Jansen, Ohnesorge, and Pirotton~\cite{JOP26} recently proved a lower bound
specifically for high-multiplicity bin packing: ETH rules out running time
$L^{2^{o(d)}}$.
Neither lower bound rules out FPT, but the latter implies that under ETH any
FPT running time $f(d)L^{O(1)}$ must have $f(d)=2^{2^{\Omega(d)}}$.

\paragraph{Our contribution.}
We resolve this long-standing question: \textsc{Bin Packing} admits an FPT algorithm.\footnote{Running times
count bit operations; $O^*(\cdot)$ suppresses factors polynomial in the binary
input length, with exponent independent of $d$.}
\begin{theorem}\label{thm:main}
\textnormal{\textsc{Bin Packing}} can be decided deterministically in
$O^*(2^{d^{O(d)}})$ time.
\end{theorem}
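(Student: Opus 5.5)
Our plan follows the abstract: formulate \textsc{Bin Packing} as an ILP with at most $(d+1)d^d$ variables and solve it with Lenstra--Kannan type algorithms (e.g.\ Frank--Tardos preprocessing plus Kannan or Reis--Rothvoss). These run in $p^{O(p)}$ times a polynomial in the encoding length, where $p$ is the number of variables. With $p=(d+1)d^d=2^{O(d\log d)}$ we get $p^{O(p)}=2^{d^{O(d)}}$, which is the bound in \cref{thm:main}.

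Let $P=\{x\in\R^d_{\ge0}: \sum_i s_ix_i\le B\}$ be the knapsack polytope, so that bin configurations are the points of $P\cap\Z^d$. For each residue vector $r\in\{0,\dots,d-1\}^d$, let $C_r=\{x\in P\cap\Z^d : x\equiv r \pmod d\}$ and $Q_r=\conv(C_r)$. The ILP has, for every $r$, a variable $\lambda_r\in\Z_{\ge0}$ (the number of bins of class $r$) and a vector $y_r\in\Z^d$ (the items packed in those bins). Its constraints are
\[
y_r\in\lambda_r Q_r,\qquad y_r\equiv \lambda_r r \pmod d,\qquad \sum_r y_r=a,\qquad \sum_r\lambda_r\le b .
\]
The congruence is modelled with auxiliary integer variables, or alternatively we substitute $y_r=\lambda_r r+d z_r$. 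The condition $y_r\in\lambda_rQ_r$ requires a linear description of the cone over $Q_r$. Writing $Q_r = r + d\cdot\conv(\{z\in\Z^d_{\ge0}: \sum_i s_i(r_i+dz_i)\le B\})$, it is the integer hull of a knapsack-type simplex. We cannot afford its facet description, since the number of facets is not bounded in $d$. So instead of listing facets explicitly we use a separation oracle (optimizing a linear function over $Q_r$ is a $d$-dimensional knapsack ILP, solvable in $d^{O(d)}$ time by Lenstra). We then run an ILP algorithm that needs only separation, for instance Dadush's algorithm or Kannan's algorithm in the convex-body/oracle version, with running time $p^{O(p)}$ times polynomially many oracle calls. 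This step should be checked, since it determines how the ILP is accessed.

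Correctness. Any packing gives a feasible ILP solution: group the bins by class, set $\lambda_r$ to the number of bins of class $r$ and $y_r$ to their sum. Then $y_r\in\lambda_rQ_r$, and the sum of the bins is congruent to $\lambda_r r$. The converse is the heart of the proof and the main obstacle. We need the \emph{integer decomposition property}: if $y\in\lambda Q_r$ is integral with $y\equiv\lambda r\pmod d$, then $y$ is a sum of $\lambda$ points of $C_r$. Equivalently, after the affine change $x=r+dz$, every integer point of $\lambda$ times the polytope $K_r=\conv\{z\in\Z^d_{\ge0}: s\cdot z\le (B-s\cdot r)/d\}$ is a sum of $\lambda$ integer points of $K_r$. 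My approach would be as follows. Take $y/\lambda\in K_r$ and write it by Carathéodory as a convex combination of at most $d+1$ vertices. Then use the freedom of the residues $r_i<d$: since $r_i\in\{0,\dots,d-1\}$ ranges over all residues, rounding errors from distributing $\lambda$ bins among $\le d+1$ vertex types can be absorbed. Concretely, take floors of the multipliers. The fractional remainder is a convex combination of at most $d+1$ vertices with total weight an integer $k\le d$. Sorting and splitting coordinates in a Shapley--Folkman style lets each of the $k$ leftover bins receive at most one unit of rounding per coordinate. The scaling by $d$ is what pays for this: a unit perturbation in $z$-space costs $d\cdot$ in $x$-space, but it corresponds to moving within one residue class. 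The residue classes modulo $d$ are chosen precisely so that these $\le d$ leftover bins can be rebalanced.

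I expect proving this IDP statement rigorously to be the hardest part, and I would first attempt it for $d=1,2$ to calibrate the rounding argument. Once it holds, feasibility of the ILP is equivalent to existence of a packing. The running time is then dominated by the ILP solve, which gives $O^*(2^{d^{O(d)}})$.
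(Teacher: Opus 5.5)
Your ILP and the algorithmic part essentially coincide with the paper's: residue classes modulo $d$, separation over $Q_r$ via a $d$-variable knapsack ILP, then an oracle-based integer programming algorithm in dimension $(d+1)d^d$. Whether that last step costs $p^{O(p)}$ or the paper's $2^{O(p^3)}$ oracle calls, the total is $2^{d^{O(d)}}$. Your extra congruence constraint is also harmless.

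\textbf{The gap is in the converse.} The statement you single out as the heart of the proof is false in general, so the rounding sketch cannot be completed. Requiring $y$ to be a sum of $\lambda$ points of $C_r$ (configurations in the \emph{same} residue class) is, as you note, the integer decomposition property of $K_r$. Take $r=0$ and $B=dc$. Then $K_0$ is the integer hull of an arbitrary knapsack $\{z\in\Z^d_{\ge0}: s\cdot z\le c\}$. Since $0\in K_0$, any $a\in t^*K_0$ also lies in $\lceil t^*\rceil K_0$. So IDP of $K_0$ would give $\mathrm{OPT}\le\lceil\mathrm{LP}\rceil$ for the Gilmore--Gomory relaxation of every instance with sizes $s$ and capacity $c$. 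This integer round-up property is known to fail (Marcotte's non-IRUP cutting-stock instances). Your floor-and-rebalance step runs into exactly this obstruction. The leftover part is an integer point of $kK_r$ with $k\le d$ that must be split into $k$ integer points of $K_r$, which is the same IDP question again. Testing $d=1,2$ would not expose this, because every lattice polygon has IDP.

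\textbf{The missing idea is to let the bins leave the residue class.} Work with $Q_r=r+d\conv(V)$, where $V=(C_r-r)/d\subseteq\Z^d$, and use the full lattice $\Z^d$.
\begin{itemize}
\item For integral $y\in\lambda Q_r$, write $y-\lambda r=\sum_{i\le d+1}\alpha_i v_i$ with $v_i\in V$ and $\sum_i\alpha_i=\lambda d$.
\item The floors satisfy $\sum_i\lfloor\alpha_i\rfloor>\lambda d-d-1$, hence $\sum_i\lfloor\alpha_i\rfloor\ge(\lambda-1)d$.
\item Peel off $\lambda-1$ blocks of $d$ whole copies of the $v_i$. Each block sum plus $r$ lies in $Q_r\cap\Z^d$.
\item The remainder is integral, being $y-\lambda r$ minus integral vectors. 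It has nonnegative weight exactly $d$, so it plus $r$ also lies in $Q_r\cap\Z^d$.
\end{itemize}
These points are generally \emph{not} congruent to $r$ modulo $d$. However, $Q_r\subseteq P$ and $P\cap\Z^d$ is exactly the set of bin configurations, so they are valid bins. This is the paper's dilation lemma: $r+k\conv(V)$ has IDP for $k\ge d$. The scaling by $d$ pays precisely by enlarging the set of admissible bins from $C_r$ to $Q_r\cap\Z^d$. With this lemma your congruence constraint becomes unnecessary, and your ILP is proved correct.
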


We first express \textsc{Bin Packing} as an integer linear program (ILP) with
at most $(d+1)d^d$ variables and then solve the ILP using separation oracles
and Lenstra's algorithm~\cite{GLS88,Len83,DV12}.

To obtain the ILP, we represent the contents of one bin by a \emph{bin configuration},
a vector recording the number of items of each type.
We partition the bin configurations according to their coordinate remainders
modulo $d$.
For each class $r$, we use one integer variable for the bin count and
$d$ integer variables for total item counts.
The bin average configuration must lie in the convex
hull $P_r$ of the configurations in $r$.
This gives our ILP formulation.
To prove correctness, we establish that each polytope $P_r$ has the
\emph{integer decomposition property}.
This guarantees that any integral vector of item totals in a copy of $P_r$
scaled by the bin count can be written as a sum of the specified number of
feasible bin configurations.

\paragraph{Related work.}
The \emph{support} of a packing is the number of distinct vectors of item counts
used by its bins.
Eisenbrand and Shmonin~\cite{ES06} proved Carath\'eodory-type bounds for integer
cones, implying that every feasible bin packing instance has an optimal packing
with support at most $2^d$.
Jansen, Pirotton, and Tutas~\cite{JPT25} constructed instances with $d$ item types
for which every optimal packing has support $2^{\Omega(d)}$, showing that an
exponential dependence on $d$ is necessary in the support bound.

Approximation algorithms for high-multiplicity bin packing provide additive
guarantees in terms of $d$.
Let $\mathrm{OPT}$ denote the minimum number of bins.
LP rounding of the Gilmore--Gomory formulation~\cite{GG61} gives a packing
using at most $\mathrm{OPT}+d$ bins.
Filippi and Agnetis~\cite{FA05} gave a polynomial-time algorithm for every
fixed $d\ge2$ using at most $\mathrm{OPT}+d-2$ bins.
Filippi~\cite{F07} gave a polynomial-time algorithm for every fixed $d$ with
additive error one for $3\le d\le6$ and $1+\lfloor(d-1)/3\rfloor$ for $d>6$.
Jansen and Solis-Oba~\cite{JSO11} obtained a packing using at most
$\mathrm{OPT}+1$ bins in FPT time parameterized by $d$.

FPT algorithms for bin packing under other parameterizations have also
been studied.
With unary-encoded multiplicities $a_i$ and binary-encoded sizes $s_i$,
Mnich and Wiese~\cite{MW15} obtained FPT in the largest integer item size
$\max_i s_i$.
For explicitly listed items, Bannach et al.~\cite{BBMMLRS20} gave a deterministic
FPT algorithm parameterized by the number of items of size at most $B/3$.
For high-multiplicity vector bin packing, including bin packing with
cardinality constraints, Knop et al.~\cite{KKLMO21} obtained FPT
jointly parameterized by the number of item types, the vector dimension, and
the largest integer size coordinate.
In contrast, Jansen, Kratsch, Marx, and Schlotter~\cite{JKMS13} showed that bin
packing is W[1]-hard parameterized by the number of bins $b$, even when both
$s_i$ and $a_i$ are encoded in unary.

\paragraph{Organization.}
The rest of this paper is organized as follows.
\Cref{sec:formulation} gives an ILP formulation of \textsc{Bin Packing}.
\Cref{sec:oracles} constructs the separation oracle and proves \cref{thm:main}.

\section{An integer linear programming formulation}\label{sec:formulation}\label{sec:cover}
In this section, we formulate \textsc{Bin Packing} as an ILP with $(d+1)d^d$
variables.
Write $s:=(s_1,\ldots,s_d)$ and $a:=(a_1,\ldots,a_d)$.
A \emph{bin configuration} is a vector whose $i$th coordinate records the number
of items of type $i$ packed in one bin.
To map a packing to an ILP solution, we group its bins by the coordinate
remainders of their configurations modulo $d$.
All residue indices $r$ range over $\{0,\ldots,d-1\}^d$.
For each $r$, we use integer variables $n_r$ and $z_r\in\Z^d$, whose intended meanings are 
the number of bins in that group and the sum of their configurations,
respectively.
We impose the constraint $(z_r,n_r)\in C_r$, where $C_r$ is a cone defined as
follows.

Define the set $S$ of bin configurations and its convex hull $P$ by
\[
  S:=\{x\in\Z_{\ge0}^d:s^\top x\le B\},\qquad P:=\conv(S).
\]
For every $r$, let $S_r$ consist of the bin configurations congruent to $r$
coordinatewise modulo $d$, and let $P_r$ be the convex hull of $S_r$:
\[
  S_r:=S\cap(r+d\Z^d),\qquad P_r:=\conv(S_r).
\]
Since the nonnegativity and capacity inequalities hold throughout $P$,
we have $P\cap\Z^d=S$.
Thus $P_r\subseteq P$ and $S=\bigcup_r(P_r\cap\Z^d)$.
Although integer points of $P_r$ need not be congruent to $r$ modulo $d$,
they are all feasible bin configurations: $P_r\cap\Z^d\subseteq S$.

For every $r$, define the rational polyhedral cone $C_r\subseteq\R^{d+1}$ by
\[
  C_r:=\{(0,0)\}\cup\{(np,n):p\in P_r,\ n\in\R_{>0}\}.
\]

Let $D:=(d+1)d^d$.
Let $K\subseteq\R^D$ be the rational polytope defined by the following constraints,
imposed for every $r$ and $1\le i\le d$:
\[
  \sum_r z_r=a,\qquad \sum_r n_r=b,\qquad
  0\le n_r\le b,\qquad 0\le z_{r,i}\le a_i,\qquad (z_r,n_r)\in C_r.
\]
We call the equalities and variable bounds the \emph{count constraints}, and the
requirements $(z_r,n_r)\in C_r$ the \emph{cone constraints}.
Our ILP asks whether $K\cap\Z^D$ is nonempty.

The rest of this section is devoted to proving the correctness of this ILP
formulation.
The forward direction is straightforward (\cref{lem:packing-to-ilp}), while the
converse uses the integer decomposition property to obtain $n_r$ bin configurations whose sum is $z_r$
from any integral point $(z_r,n_r)\in C_r$.

\begin{lemma}\label{lem:packing-to-ilp}
If the \textnormal{\textsc{Bin Packing}} instance is feasible, then
$K\cap\Z^D\ne\emptyset$.
\end{lemma}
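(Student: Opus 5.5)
Take a feasible packing and read the ILP solution straight off it. Feasibility gives at most $b$ bins together with configurations $x^{(1)},\ldots,x^{(m)}\in S$, $m\le b$, such that $\sum_j x^{(j)}=a$. The equality $\sum_r n_r=b$ asks for exactly $b$ bins, so first pad with $b-m$ empty bins. Their configuration is $0\in S$, since $s^\top 0=0\le B$. We then have exactly $b$ configurations $x^{(1)},\ldots,x^{(b)}\in S$ still summing to $a$.

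Next group the bins by residue. For each $r\in\{0,\ldots,d-1\}^d$, let $J_r:=\{j: x^{(j)}\in r+d\Z^d\}$, and set $n_r:=|J_r|$ and $z_r:=\sum_{j\in J_r}x^{(j)}$. Every integer vector has exactly one coordinatewise residue modulo $d$ in $\{0,\ldots,d-1\}^d$, so the sets $J_r$ partition $\{1,\ldots,b\}$. Hence $\sum_r n_r=b$ and $\sum_r z_r=a$. All the quantities are nonnegative integers, because the configurations are nonnegative. Each $z_r$ is a subsum of the nonnegative vectors summing to $a$, so $0\le z_{r,i}\le a_i$; likewise $0\le n_r\le b$. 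This settles all the count constraints.

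The only step with any content is the cone constraint, and even that is routine. If $n_r=0$, then $J_r=\emptyset$, so $(z_r,n_r)=(0,0)\in C_r$. If $n_r>0$, set $p:=z_r/n_r$. This is an average of points of $S_r$, hence a convex combination of them, so $p\in\conv(S_r)=P_r$. Then $(z_r,n_r)=(n_rp,n_r)$ with $n_r>0$, which lies in $C_r$. The point $(z_r,n_r)_r$ is integral and satisfies every constraint defining $K$, so it lies in $K\cap\Z^D$. The only detail that needs care is the padding with empty bins, which turns the ``at most $b$'' in the problem into the equality in the ILP.
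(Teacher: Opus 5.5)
Your proposal is correct and follows the paper's proof essentially verbatim: you pad with empty bins to get exactly $b$ configurations, group them by residue into the sets $J_r$, set $n_r=|J_r|$ and $z_r=\sum_{j\in J_r}x^{(j)}$, and verify the cone constraint either via $z_r/n_r\in P_r$ by convexity or via $(z_r,n_r)=(0,0)$ when $J_r=\emptyset$. You spell out the count-constraint bounds and the padding step more explicitly than the paper does, but the argument is the same.
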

\begin{proof}
Let $x^{(1)},\ldots,x^{(b)}\in S$ be the bin configurations of a feasible packing,
with empty bins added if necessary.
For each $r$, define
$J_r:=\{j\in\{1,\ldots,b\}:x^{(j)}\in S_r\}$,
$n_r:=|J_r|$, and $z_r:=\sum_{j\in J_r}x^{(j)}$.
Since the sets $J_r$ partition $\{1,\ldots,b\}$ and all $x^{(j)}$ are nonnegative,
the count constraints hold.
If $n_r>0$, then $z_r/n_r\in P_r$ by convexity, while $n_r=0$ gives $z_r=0$.
Thus the cone constraints also hold, and $((z_r,n_r))_r$ is an integral point of $K$.
\end{proof}

We now prove the converse direction.
A \emph{lattice polytope} in $\R^d$ is the convex hull of a finite nonempty subset
of $\Z^d$.
For a polytope $Q\subseteq\R^d$ and a real number $n>0$, we define
$nQ:=\{nx:x\in Q\}$.
A lattice polytope $Q\subseteq\R^d$ has the \emph{integer decomposition property}
if, for every integer $n\ge1$, each $z\in nQ\cap\Z^d$ can be expressed as a sum
of $n$ points of $Q\cap\Z^d$.

We establish the integer decomposition property of every nonempty $P_r$
using the following standard lemma
(see \cite[Theorem~2.2.12]{CLS11}).
Once this property is established, the converse direction follows immediately
as well (\cref{lem:ilp-to-packing}).
\begin{lemma}\label{lem:dilation}
Let $V\subseteq\Z^d$ be finite and nonempty, let $k\ge d$ be an integer, and
let $r\in\Z^d$.
Then $r+k\conv(V)$ has the integer decomposition property.
\end{lemma}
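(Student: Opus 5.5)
We reduce to the case $r=0$ and then peel off one lattice point of $k\conv(V)$ at a time, using Carath\'eodory's theorem and integer rounding. Write $Q:=\conv(V)$, a lattice polytope. Since $r\in\Z^d$, the map $x\mapsto x-r$ is a bijection of $\Z^d$, and $r+kQ$ is again a lattice polytope (the hull of $r+kV$). For $z\in n(r+kQ)\cap\Z^d$ we have $z-nr\in nkQ\cap\Z^d$. If $z-nr=y_1+\cdots+y_n$ with $y_j\in kQ\cap\Z^d$, then $z=\sum_j(r+y_j)$ with $r+y_j\in(r+kQ)\cap\Z^d$. So it suffices to prove the integer decomposition property for $kQ$.

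For this, I will show by induction on $n\ge1$ that every $w\in nkQ\cap\Z^d$ is a sum of $n$ lattice points of $kQ$. The case $n=1$ is trivial. For $n\ge2$ it suffices to find one $y\in kQ\cap\Z^d$ with $w-y\in(n-1)kQ\cap\Z^d$, and then apply induction to $w-y$.

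To find $y$, apply Carath\'eodory's theorem to $w/(nk)\in Q=\conv(V)$. This writes
\[
w=\sum_{j=0}^{m}\mu_j v_j,\qquad v_j\in V,\ m\le d,\ \mu_j\ge0,\ \sum_j\mu_j=nk.
\]
Split $\mu_j=\lfloor\mu_j\rfloor+\{\mu_j\}$. Then
\[
\sum_j\{\mu_j\}v_j=w-\sum_j\lfloor\mu_j\rfloor v_j
\]
is integral. Also $\sum_j\{\mu_j\}=nk-\sum_j\lfloor\mu_j\rfloor$ is an integer. It is strictly less than $m+1$, so $t:=\sum_j\{\mu_j\}\le m\le d\le k$. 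This is the only place where the hypothesis $k\ge d$ is used.

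Now choose nonnegative integers $c_j\le\lfloor\mu_j\rfloor$ with $\sum_j c_j=k-t$. This is possible because $k-t\ge0$ and
\[
\sum_j\lfloor\mu_j\rfloor=nk-t\ge k-t.
\]
Set $y:=\sum_j(\{\mu_j\}+c_j)v_j$. Then $y$ is integral, and it is a nonnegative combination of points of $V$ with coefficient sum $k$, so $y\in kQ$. Moreover $w-y=\sum_j(\lfloor\mu_j\rfloor-c_j)v_j$ has nonnegative coefficients summing to $(n-1)k$, so $w-y\in(n-1)kQ\cap\Z^d$. This completes the induction.

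The step needing care is the fractional-part bound $t\le d\le k$. It is what guarantees that all the fractional parts fit into a single dilate $kQ$ alongside whole copies of vertices. Everything else is bookkeeping.
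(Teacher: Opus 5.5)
Your proof is correct and follows essentially the paper's argument: the same shift to $r=0$, then Carath\'eodory, with the observation that the total fractional coefficient mass is an integer at most $d\le k$, so it fits into a single dilate $k\conv(V)$ together with integer copies of the $v_j$. The only difference is cosmetic: the paper splits the $(n-1)k$ integer copies directly into $n-1$ groups of $k$ rather than inducting on $n$. Your argument could do the same, since $w-y=\sum_j(\lfloor\mu_j\rfloor-c_j)v_j$ is already such an integer combination.
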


We include a proof for completeness, using the following form of
Carath\'eodory's theorem.
\begin{lemma}[Carath\'eodory's theorem; \cite{Rock70}, Theorem~17.1]\label{lem:caratheodory}
Let $V\subseteq\R^d$ be finite and nonempty, let $t>0$, and let $z\in t\conv(V)$.
There exist points $v_1,\ldots,v_{d+1}\in V$ and nonnegative real numbers
$\alpha_1,\ldots,\alpha_{d+1}$ such that
$z=\sum_{i=1}^{d+1}\alpha_i v_i$ and $\sum_{i=1}^{d+1}\alpha_i=t$.
\end{lemma}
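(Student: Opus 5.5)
The plan is to reduce to the classical convex-hull form of Carath\'eodory's theorem and then rescale. Since $z\in t\conv(V)$ with $t>0$, the point $y:=z/t$ lies in $\conv(V)$. Because $V$ is finite, $y$ is a convex combination $y=\sum_{j=1}^m\beta_j w_j$ of distinct points $w_1,\ldots,w_m\in V$, with $\beta_j>0$ and $\sum_j\beta_j=1$. Among all such representations I will choose one with $m$ minimal.

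The key step is to show that minimality forces $m\le d+1$. Suppose instead that $m\ge d+2$. Then the $m$ vectors $(w_j,1)\in\R^{d+1}$ are linearly dependent, so there exist reals $\gamma_j$, not all zero, with $\sum_j\gamma_j w_j=0$ and $\sum_j\gamma_j=0$. Since the $\gamma_j$ sum to zero and are not all zero, some $\gamma_j$ is positive. Set
\[
\lambda:=\min\{\beta_j/\gamma_j:\gamma_j>0\}>0
\]
and replace each $\beta_j$ by $\beta_j-\lambda\gamma_j$. The new coefficients are nonnegative, still sum to $1$, still represent $y$, and at least one of them vanishes. Deleting the vanishing terms gives a shorter representation, contradicting the minimality of $m$. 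This exchange step is the only real content of the proof, and it is routine.

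To finish, multiply the representation by $t$, setting $\alpha_j:=t\beta_j\ge0$. Then $z=\sum_j\alpha_j w_j$ and $\sum_j\alpha_j=t$. If $m<d+1$, I pad the list to exactly $d+1$ points by repeating $w_1$, which exists since $V$ is nonempty, and giving each added copy the coefficient $0$. Together with the minimal-support argument, this padding yields points $v_1,\ldots,v_{d+1}\in V$ and coefficients $\alpha_1,\ldots,\alpha_{d+1}$ in exactly the form stated in \cref{lem:caratheodory}. I do not expect any step to be an obstacle beyond setting up the linear dependence correctly.
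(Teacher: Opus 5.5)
Your proof is correct and complete. The paper gives no proof of this lemma and simply cites Rockafellar's Theorem~17.1, which is proved by the same minimal-support, linear-dependence argument (carried out there on the lifted vectors in $\R^{d+1}$); your rescaling by $t$ and zero-coefficient padding with repeated points are exactly what turn that classical statement into the form used here.
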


\begin{proof}[Proof of \cref{lem:dilation}]
First consider $r=0$.
Fix an integer $n\ge1$ and a point $z\in nk\conv(V)\cap\Z^d$.
By \cref{lem:caratheodory}, choose points $v_1,\ldots,v_{d+1}\in V$ and nonnegative
coefficients $\alpha_1,\ldots,\alpha_{d+1}$ with
$z=\sum_{i=1}^{d+1}\alpha_i v_i$ and $\sum_{i=1}^{d+1}\alpha_i=nk$.
Since $\sum_{i=1}^{d+1}\lfloor\alpha_i\rfloor$ is an integer exceeding $nk-d-1$,
\[
  \sum_{i=1}^{d+1}\lfloor\alpha_i\rfloor\ge nk-d\ge(n-1)k.
\]
Thus we can take $(n-1)k$ points from the collection containing
$\lfloor\alpha_i\rfloor$ copies of $v_i$ for each $i$.
Partition the $(n-1)k$ points into $n-1$ groups of $k$ points and let
$z_2,\ldots,z_n$ be their sums.
For every $2\le j\le n$, the vector $z_j$ is integral and $z_j/k\in\conv(V)$.
Define the vector $z_1:=z-\sum_{j=2}^n z_j$.
Each coefficient $\alpha_i$ is decreased by at most $\lfloor\alpha_i\rfloor$, so
$z_1$ is a nonnegative linear combination of $v_1,\ldots,v_{d+1}$ with coefficient
sum $nk-(n-1)k=k$.
Thus $z_1\in k\conv(V)\cap\Z^d$ and $z=\sum_{j=1}^n z_j$.

Now take $z\in n(r+k\conv(V))\cap\Z^d$.
Since $z-nr\in nk\conv(V)\cap\Z^d$, the case $r=0$ gives
$z-nr=\sum_{j=1}^n z_j$ with $z_j\in k\conv(V)\cap\Z^d$.
Hence $z=\sum_{j=1}^n(r+z_j)$, and every $r+z_j$ belongs to
$(r+k\conv(V))\cap\Z^d$.
\end{proof}

We now use \cref{lem:dilation} to obtain a feasible packing from a feasible ILP solution.
\begin{lemma}\label{lem:ilp-to-packing}
If $K\cap\Z^D\ne\emptyset$, then the
\textnormal{\textsc{Bin Packing}} instance is feasible.
\end{lemma}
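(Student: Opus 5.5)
We take an integral point $((z_r,n_r))_r\in K\cap\Z^D$ and construct $b$ bin configurations summing to $a$. The plan is to handle each residue class $r$ separately: we produce $n_r$ configurations in $P_r\cap\Z^d\subseteq S$ whose sum is $z_r$. Concatenating these over all $r$ gives $\sum_r n_r=b$ configurations, all lying in $S$, with total $\sum_r z_r=a$. This is a packing of all items into exactly $b$ bins, since each configuration satisfies $s^\top x\le B$ and the nonnegative integer entries $a_i$ are distributed as prescribed.

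Fix $r$. If $n_r=0$, the cone constraint forces $z_r=0$ by the definition of $C_r$ (only $(0,0)$ has last coordinate $0$), so this class contributes no bins. If $n_r\ge1$, then $(z_r,n_r)\in C_r$ means $z_r=n_rp$ for some $p\in P_r$. In particular $P_r\neq\emptyset$, so $S_r\ne\emptyset$ and $P_r$ is a lattice polytope. Thus $z_r\in n_rP_r\cap\Z^d$.

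The key step is to show that $P_r$ has the integer decomposition property, and for this we write it in the form required by \cref{lem:dilation}. Every point of $S_r$ has the form $r+dw$ with $w\in\Z^d$. Let $V:=\{w\in\Z^d: r+dw\in S_r\}$; this set is finite and nonempty because $S$ is bounded (all $s_i\ge1$) and $S_r\ne\emptyset$. Since the affine map $w\mapsto r+dw$ commutes with convex hulls, $P_r=\conv(S_r)=r+d\conv(V)$. Applying \cref{lem:dilation} with $k=d\ge d$ shows that $P_r$ has the integer decomposition property. Hence $z_r$ is a sum of $n_r$ points of $P_r\cap\Z^d$, and we already observed that $P_r\cap\Z^d\subseteq P\cap\Z^d=S$.

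The only step needing care is the identification $P_r=r+d\conv(V)$ together with the nonemptiness and finiteness of $V$; the remaining steps follow directly from the definitions. Note that the decomposed points need not themselves be congruent to $r$, but this is harmless because we only need them to lie in $S$.
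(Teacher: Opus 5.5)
Your proposal is correct and matches the paper's proof. You split by residue class, use the cone constraint to get $z_r=0$ or $z_r\in n_rP_r\cap\Z^d$, write $P_r=r+d\conv((S_r-r)/d)$, and apply \cref{lem:dilation} with $k=d$. The only difference is that you spell out a bit more explicitly why $V$ is finite and nonempty and why $P_r\cap\Z^d\subseteq S$.
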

\begin{proof}
Take $((z_r,n_r))_r\in K\cap\Z^D$.
The cone constraints give $z_r=0$ when $n_r=0$ and
$z_r\in n_rP_r\cap\Z^d$ when $n_r>0$.
In the latter case, $P_r=r+d\conv((S_r-r)/d)$, where $(S_r-r)/d$ is a finite
nonempty subset of $\Z^d$.
By \cref{lem:dilation}, $P_r$ has the integer decomposition property, so $z_r$
is a sum of $n_r$ configurations in $P_r\cap\Z^d\subseteq S$.
Using these configurations as bins gives a feasible packing, since the count
constraints ensure $\sum_r n_r=b$ and $\sum_r z_r=a$.
\end{proof}

By \cref{lem:packing-to-ilp,lem:ilp-to-packing}, \textsc{Bin Packing} is equivalent
to an ILP with $D=(d+1)d^d$ variables.
However, the cone constraints specify membership in $C_r$, rather than an explicit
list of linear inequalities.
Without an efficiently computable inequality description of $K$, we cannot
directly apply an algorithm for explicitly given ILPs such as
Lenstra's~\cite{Len83}.
We instead construct a strong separation oracle for $K$ (\cref{sec:oracles}).

\section{Separation and the algorithm}\label{sec:oracles}\label{sec:algorithm}
We prove \cref{thm:main} using integer programming over
bounded convex sets.
A \emph{strong separation oracle} for a closed convex set $Q\subseteq\R^n$,
on input $y\in\mathbb Q^n$, either certifies $y\in Q$ or returns
$u\in\mathbb Q^n$ and $\alpha\in\mathbb Q$ such that
$u^\top x\le\alpha<u^\top y$ for every $x\in Q$.
For a rational number or vector $q$, let $\enc(q)$ denote its total binary
encoding length, including numerators and denominators.
In the bounds for separation algorithms, $\ell$ denotes the encoding length
of the query vector.

We use the following specialization of \cite[Theorem~7.1.1]{Dad12} to the
lattice $\Z^n$, with the deterministic algorithm of
\cite[Theorem~5.4]{DV12}.
\begin{lemma}[\cite{Dad12,DV12}]\label{lem:oracle-ip}
Let $Q\subseteq[-T,T]^n$ be a closed convex set, possibly empty, where $n\ge1$
and the rational number $T\ge1$ is given.
Let $\sigma\ge n+\enc(T)$ be a given integer.
Suppose strong separation over $Q$ returns inequalities of encoding length
$(\sigma+\ell)^{O(1)}$.
One can deterministically find a point of $Q\cap\Z^n$ or certify that this
intersection is empty using at most $2^{O(n^3)}\sigma^{O(1)}$ calls to strong
separation and additional bit operations, with query lengths at most
$2^{O(n^3)}\sigma^{O(1)}$.
\end{lemma}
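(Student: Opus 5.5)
This lemma is the lattice-$\Z^n$ case of \cite[Theorem~7.1.1]{Dad12}, so the proof consists of checking that its hypotheses hold and translating its complexity bounds. Should a self-contained argument be wanted, I would reconstruct it as a Lenstra-type recursion driven by the ellipsoid method, in the style of \cite{GLS88}. The setting already provides what the ellipsoid method needs: $Q\subseteq[-T,T]^n$ gives an initial ball of radius $\sqrt n\,T$ with encoding length $O(\sigma)$, and separation answers have polynomial length. The main work is to show that the branching factor per level and the encoding lengths stay under control.

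The key steps, in order, are as follows.

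First, replace $Q$ by $Q':=\conv(Q\cap\Z^n)$. This does not change the set of integer points. If $Q\cap\Z^n$ is nonempty, the vertices of $Q'$ are integral vectors bounded by $T$, so either $Q'$ is full-dimensional with volume at least $1/n!$, or it lies in a rational hyperplane of encoding length $\sigma^{O(1)}$.

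Second, run the shallow-cut ellipsoid method on $Q$. Whenever the center is integral and lies in $Q$, we have found our point and stop. The method ends in one of two ways. Either it produces an ellipsoid $E$ with $E\subseteq\mathrm{conv}$-neighbourhood of $Q$ and $Q\subseteq \gamma E$ for $\gamma=2^{O(n)}$ (a weak rounding), or the volume drops below the threshold. In the second case, simultaneous Diophantine approximation (GLS, using LLL) yields a rational hyperplane containing all of $Q\cap\Z^n$. We then restrict to that hyperplane, reparametrise the lattice points on it as $\Z^{n-1}$ via a unimodular basis computed in polynomial time, and recurse in dimension $n-1$. The separation oracle transfers by composing with the affine map, and encoding lengths grow only polynomially.

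Third, in the rounding case, apply the deterministic lattice algorithm of \cite[Theorem~5.4]{DV12} (or LLL) in the norm of $E$. This gives one of two outcomes. Either we find an integer point near the center of $E$ that lies in $Q$, or we obtain a flatness direction $c\in\Z^n$ with $\max_{x\in Q}c^\top x-\min_{x\in Q}c^\top x\le 2^{O(n^2)}$. In the latter case we branch over the $2^{O(n^2)}$ hyperplanes $c^\top x=\beta$ and recurse in dimension $n-1$ on each.

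The recursion has depth $n$ and branching $2^{O(n^2)}$, so it makes $2^{O(n^3)}$ leaf calls. Each level performs $\sigma^{O(1)}$ ellipsoid iterations, which yields the stated number of oracle calls.

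The hardest part is controlling bit lengths. Queries must stay of length $2^{O(n^3)}\sigma^{O(1)}$ even though the affine reparametrisations compose across $n$ levels. I would handle this by rounding the ellipsoid centers and matrices to precision $\sigma^{O(1)}$ bits, as GLS do, and by using the given bound $(\sigma+\ell)^{O(1)}$ on the length of returned inequalities. With these two tools, I would show that each level increases lengths by at most a polynomial factor that does not compound.
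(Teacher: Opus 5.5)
Your primary route, invoking \cite[Theorem~7.1.1]{Dad12} for the lattice $\Z^n$, is exactly what the paper does. The paper gives no argument beyond that citation, except that it pairs it with \cite[Theorem~5.4]{DV12} to make the algorithm deterministic, which your citation line omits.

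Your fallback reconstruction is a genuinely different, classical route: Lenstra's recursion driven by the GLS shallow-cut ellipsoid method, with LLL supplying flatness. For the weak bound $2^{O(n^3)}$ claimed here it suffices. It is deterministic with no M-ellipsoid machinery, since LLL alone works in Step~3 and DV12 is not needed there. It also handles arbitrary closed convex $Q$, because every conclusion about $Q\cap\Z^n$ comes from exact containments rather than from facet complexity of $Q$. The citation buys brevity and off-the-shelf bookkeeping.

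If you write the reconstruction out, two points need tightening.
\begin{itemize}
\item In the rounding case you need exact inner containment, not a neighbourhood of $Q$ as your Step~2 phrasing suggests. Strong separation at $2n$ rational points along conjugate axes of $\frac{1}{n+1}E$ either gives a shallow cut, or certifies that their convex hull lies in $Q$; that hull contains a concentric copy of $E$ shrunk by a factor $\Omega(n^{-3/2})$. With only a neighbourhood, the integer point produced in Step~3 may miss $Q$, and you would have neither outcome of your dichotomy.
\item The bit-length claim is the real content and is only asserted. It works because the ellipsoid precision at a node is $\mathrm{poly}(n)$ times the encoding length of the current box bound. Hence the flatness direction and the new unimodular parametrisation multiply encoding lengths by a $\mathrm{poly}(n)$ factor per level, giving $n^{O(n)}\sigma$ overall. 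A per-level blow-up of the form $\sigma\mapsto\sigma^{O(1)}$ would instead compound to $\sigma^{2^{O(n)}}$, which the lemma does not allow. Correspondingly, your ``$\sigma^{O(1)}$ iterations per level'' should be polynomial in the current level's parameter, not the original one.
\end{itemize}

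In the small-volume case, LLL in the norm of $E$ is simpler than simultaneous Diophantine approximation: $E$ has lattice width below $1$ once $\mathrm{vol}(E)\le 2^{-cn^2}$, so all integer points of $Q$ lie on one hyperplane.
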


Let $T:=1+\max\{b,a_1,\ldots,a_d\}$.
The count constraints give $K\subseteq[-T,T]^D$ and $\enc(T)=O(L)$.
By \cref{lem:oracle-ip}, with $n=D$ and
$\sigma=D+L+\enc(T)$, it remains to construct strong separation over $K$
with output length $(D+L+\ell)^{O(1)}$.
We first obtain separation over each $P_r$ from optimization, and then
combine these oracles with the count constraints.

A \emph{strong optimization oracle} for a nonempty rational polytope
$Q\subseteq\R^n$, on input $u\in\mathbb Q^n$, returns a rational point of $Q$
maximizing $u^\top x$.
The following result gives the required conversion.
\begin{lemma}[\cite{GLS88}, Theorem~6.4.9]\label{lem:optimization-separation}
Let $Q\subseteq\R^n$ be a nonempty rational polytope with vertex encoding
length at most $v$.
Suppose strong optimization over $Q$, on input $u\in\mathbb Q^n$, returns
a point of encoding length $(n+v+\enc(u))^{O(1)}$.
Strong separation over $Q$ can be implemented using $(n+v+\ell)^{O(1)}$
calls to strong optimization and additional bit operations, with query and
output lengths $(n+v+\ell)^{O(1)}$.
\end{lemma}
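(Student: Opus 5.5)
The plan is to reduce strong separation over $Q$ to linear optimization over a ``polar'' polyhedron whose own separation problem is exactly strong optimization over $Q$. Then we run the ellipsoid method there, in the style of \cite{GLS88}.

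\emph{The auxiliary polyhedron.} Let $X$ be the (unknown, finite) vertex set of $Q$, and let $M:=2^{v}$. Define
\[
  R:=\{(u,\alpha)\in\R^{n}\times\R:\ -1\le u_i\le 1\ (1\le i\le n),\ |\alpha|\le nM,\ u^\top x\le\alpha\ \text{for all }x\in X\}.
\]
Every vertex of $Q$ has encoding length at most $v$, so each $|x_i|\le M$. Hence the bound on $\alpha$ never cuts off a pair $(u,\alpha)$ with $\alpha=\max_{x\in Q}u^\top x$ and $u\in[-1,1]^n$. Moreover $R$ is full-dimensional, since $(0,\alpha)$ is interior for small $\alpha>0$. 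Every defining inequality of $R$ has encoding length $O(n+v)$, so $R$ has facet complexity $O(n+v)$ and vertex complexity $(n+v)^{O(1)}$.

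\emph{Separation over $R$ via optimization over $Q$.} Given a rational query $(u,\alpha)$, first check the box constraints directly. Otherwise call strong optimization over $Q$ with objective $u$, obtaining $x^\ast$. If $u^\top x^\ast\le\alpha$, then $(u,\alpha)\in R$. Otherwise the inequality $u^\top x^\ast\le\alpha$, viewed as linear in $(u,\alpha)$, separates. Its encoding length is $(n+v+\enc(u))^{O(1)}$ by the hypothesis on the oracle's outputs.

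\emph{Deciding $y\in Q$.} Given $y$ of encoding length $\ell$, solve the LP $\max\{u^\top y-\alpha:(u,\alpha)\in R\}$ exactly. We use the ellipsoid method on the well-described polyhedron $R$ together with the exact-rounding step (simultaneous Diophantine approximation) of \cite{GLS88}, which returns an optimal vertex. Since $R$ has vertex complexity $(n+v)^{O(1)}$, this vertex $(u^\ast,\alpha^\ast)$ has encoding length $(n+v)^{O(1)}$. The running time and the query lengths are polynomial in $n+v+\ell$.

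If the optimum is $\le 0$, we certify $y\in Q$. Indeed, if $y\notin Q$, the separating hyperplane theorem gives some $w$ with $w^\top y>\max_{x\in Q}w^\top x$. Rescaling $w$ into $[-1,1]^n$ and setting $\alpha=\max_{x\in Q}w^\top x$ yields a point of $R$ with positive objective value. If instead the optimum is positive, we output $(u^\ast,\alpha^\ast)$. Then $u^{\ast\top}x\le\alpha^\ast$ holds on $X$, hence on $Q=\conv(X)$, while $u^{\ast\top}y>\alpha^\ast$, and the output length is as required.

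The main obstacle is the exactness of the LP step. The plain ellipsoid method gives only approximate optima, so we must rely on the rounding machinery of \cite[Ch.~6]{GLS88}. That machinery needs an a priori facet-complexity bound for $R$, which is why we use the explicit $2^{v}$ box and the vertex bound $v$. This step is also where the polynomial dependence on $n+v+\ell$ must be verified.
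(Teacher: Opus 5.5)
Your proposal is correct. The paper gives no argument for this lemma beyond citing \cite[Theorem~6.4.9]{GLS88}, so your write-up unpacks the citation rather than competing with a proof. It is still worth noting what your particular reduction buys.

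The homogenized, box-normalized polar $R\subseteq\R^{n+1}$ is always a full-dimensional polytope of facet complexity $O(n+v)$, whatever the dimension of $Q$. Strong separation over $R$ is a single call to strong optimization over $Q$. The classical polar $\{c: c^\top x\le 1 \text{ for all } x\in Q\}$ would instead need $Q$ to be full-dimensional with a known interior point, or a preliminary computation of the affine hull of $Q$. This matters in the application, since some $P_r$ are genuinely lower-dimensional; for example $P_r=\{r\}$ when no coordinate of $r$ can be increased by $d$ within capacity.

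Your route also shows exactly where the hypothesis on the optimizer's output length is used. It bounds the length of the cuts returned for $R$, and hence the cost of the ellipsoid method.

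What it does not buy is self-containment in the hard step. Solving $\max\{u^\top y-\alpha:(u,\alpha)\in R\}$ exactly and returning a vertex still uses the separation-to-optimization half of the same machinery (ellipsoid method plus simultaneous Diophantine approximation). So you reduce one direction of the equivalence to the other, although only in its easiest case, where the polytope is full-dimensional and bounded.
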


We implement optimization over $P_r$ by an ILP with $d$ variables.
\begin{lemma}\label{lem:piece-separation}
For every $r$, strong separation over $P_r$ takes
$2^{O(d^3)}(L+\ell)^{O(1)}$ bit operations, with output length
$(d+L+\ell)^{O(1)}$.
\end{lemma}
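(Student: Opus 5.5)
The plan is to apply \cref{lem:optimization-separation} to $Q=P_r$, implementing strong optimization over $P_r$ by an integer program in dimension $d$. Since $P_r$ may be empty (when $S_r=\emptyset$), the first step is to decide emptiness. If $P_r=\emptyset$, separation is trivial: return $u=0$ and $\alpha=-1$. To decide emptiness and to optimize, substitute $x=r+dy$ with $y\in\Z^d$. Then $S_r$ corresponds bijectively to the integer points of the polytope
\[
  \{y\in\R^d: r+dy\ge 0,\ s^\top(r+dy)\le B\},
\]
which is given by $d+1$ explicit inequalities whose encoding length is $O(dL)$. Here we use $r_i<d$, so the entries of $r$ are tiny.

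Strong optimization of $u^\top x$ over $P_r$ equals maximizing $u^\top(r+dy)$ over the integer points of this polytope, because a linear function over $\conv(S_r)$ attains its maximum at a point of $S_r$. The polytope is bounded, since $s_i\ge1$ forces $0\le x_i\le B$. Lenstra-type integer optimization in fixed dimension $d$, in the deterministic version of \cite{DV12} or simply via \cref{lem:oracle-ip} combined with binary search on the objective, solves this in $2^{O(d^3)}(L+\enc(u))^{O(1)}$ bit operations. It returns an optimal $y$, hence a point $x\in S_r$ that is a vertex-type optimum. Its encoding length is $O(d\log B)=O(dL)$, which is well within $(d+v+\enc(u))^{O(1)}$. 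The same call with $u=0$ decides emptiness.

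The vertex encoding length of $P_r$ is bounded by $v=O(dL)$, since every vertex lies in $S_r\subseteq[0,B]^d\cap\Z^d$. Applying \cref{lem:optimization-separation} with $n=d$ gives strong separation using $(d+L+\ell)^{O(1)}$ optimization calls. Each call has query length $(d+L+\ell)^{O(1)}$, so each costs $2^{O(d^3)}(d+L+\ell)^{O(1)}$. The output has length $(d+L+\ell)^{O(1)}$. Since $d\le L$, this gives the claimed bound $2^{O(d^3)}(L+\ell)^{O(1)}$.

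The main obstacle is making the optimization step meet the precise hypotheses of \cref{lem:optimization-separation}. Optimization must be exact rather than approximate, and the ILP in $d$ variables must run in single-exponential time $2^{O(d^3)}$ times a polynomial in the objective's encoding length, deterministically. I would get this from \cref{lem:oracle-ip} (with $n=d$ and an explicit separation oracle for the $d+1$ inequalities plus an objective cut $u^\top(r+dy)\ge\gamma$). I would then binary search over $\gamma$ in the range $[-\|u\|_1 B,\|u\|_1B]$, where the optimum has denominator dividing that of $u$. This takes $O(\enc(u)+L)$ rounds and preserves the bound.
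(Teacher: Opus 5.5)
Your proposal is correct and follows essentially the paper's route. Substitute $x=r+dy$ so that optimizing over $P_r$ becomes a bounded $d$-variable ILP solved by Lenstra-type methods. Note that every point of $S_r\subseteq[0,B]^d$ has encoding length $O(d\log(B+1))$, and invoke \cref{lem:optimization-separation}.

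The differences are minor. The paper detects emptiness directly: since $0\le r_i<d$ and $s_i\ge1$, we have $P_r\ne\emptyset$ iff $s^\top r\le B$, in which case $r\in S_r$. It does not need an ILP feasibility call for this. It also simply cites Lenstra's algorithm for the optimization rather than spelling out a binary search over the objective value.
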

\begin{proof}
If $s^\top r>B$, then $P_r=\emptyset$, so return $0^\top x\le-1$.
Otherwise, $r\in S_r$, and for every $u\in\mathbb Q^d$ we have
\[
  \max_{x\in P_r}u^\top x
  =\max\{u^\top(r+dy):y\in\Z_{\ge0}^d,\ s^\top(r+dy)\le B\}.
\]
After clearing denominators, the right-hand side is a bounded integer program
with $d$ variables and encoding length polynomial in $d+L+\enc(u)$.
Lenstra's algorithm~\cite{Len83} returns an optimal integer vector $y^*$ in
$2^{O(d^3)}(L+\enc(u))^{O(1)}$ bit operations.
Returning $r+dy^*$ therefore solves strong optimization over $P_r$.

Every point of $S_r$, and hence every vertex of $P_r$ and the returned point,
has encoding length $O(d\log(B+1))$, since $S_r\subseteq[0,B]^d$.
Applying \cref{lem:optimization-separation} gives the claimed
separation time and output-length bound.
\end{proof}

We now combine the oracles for $P_r$ into an oracle for $K$.
\begin{lemma}\label{lem:aggregate-separation}
Strong separation over $K$ takes $2^{O(d^3)}(L+\ell)^{O(1)}$ bit operations,
with output length $(D+L+\ell)^{O(1)}$.
\end{lemma}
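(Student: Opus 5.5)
Given a query $y=((z_r,n_r))_r\in\mathbb Q^D$, the oracle first checks the count constraints explicitly, then the cone constraints one class at a time. There are $2+2D$ count constraints: two vector equalities (each split into $d$ coordinate equalities, i.e.\ pairs of inequalities), one scalar equality, and the variable bounds. If one of them fails, we return that inequality (for a violated equality, the appropriate one of its two inequalities). It is valid for $K$, and its encoding length is $O(D+L)$, since its coefficients lie in $\{0,\pm1\}$ and its right-hand sides are among $0,b,a_i$. This step costs $(D+L+\ell)^{O(1)}$ bit operations.

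If all count constraints hold, we test $(z_r,n_r)\in C_r$ for each $r$, dividing into cases by $n_r$. If $n_r<0$ (already excluded by the bounds), nothing is needed. If $n_r=0$ and $z_r\ne0$, we need an inequality valid for $C_r$ that cuts off $(z_r,0)$. All points of $C_r$ satisfy $z\ge0$, and $C_r\subseteq\{(x,n):\mathbf 1^\top x\le dB\,n\}$ because $P_r\subseteq[0,B]^d$. Since $z_r\ge0$ by the bounds and $z_r\ne0$, we have $\mathbf 1^\top z_r>0=dB\cdot n_r$, so the inequality $\mathbf 1^\top x-dBn\le0$ separates. If $n_r>0$, we call the oracle of \cref{lem:piece-separation} on $p:=z_r/n_r$, whose encoding length is $O(\ell)$. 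If it certifies $p\in P_r$, the pair lies in $C_r$. Otherwise it returns $u^\top x\le\alpha$, valid on $P_r$, with $u^\top p>\alpha$. We homogenize this to $u^\top x-\alpha n\le0$. It is valid on $C_r$: at $(0,0)$ it is trivially satisfied, and at $(np,n)$ with $p\in P_r$ the left side equals $n(u^\top p-\alpha)\le0$. It is violated at the query because $u^\top z_r-\alpha n_r=n_r(u^\top p-\alpha)>0$. If $P_r=\emptyset$, the oracle returns $0^\top x\le-1$, the homogenized inequality becomes $n\le0$, and it is correctly violated because $n_r>0$. In each case the inequality is written in the variables of block $r$ and padded with zeros, so it is valid for $K\subseteq\prod_r C_r$. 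If every test passes, $y\in K$.

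Running time: there are $d^d$ calls to \cref{lem:piece-separation}, each costing $2^{O(d^3)}(L+\ell)^{O(1)}$. The statement does not display the factor $d^d$, so it has to be absorbed. Since $d^d\le 2^{d^2}$, it fits into $2^{O(d^3)}$; alternatively, note that $\ell\ge D\ge d^d$ because the query has $D$ coordinates. Output length is $(d+L+\ell)^{O(1)}\subseteq(D+L+\ell)^{O(1)}$.

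The main obstacle is to make sure the homogenized cut is both valid for the whole cone $C_r$, including the apex and the boundary case $n=0$, and strictly violated. This requires the case split on $n_r$ above and the boundedness of $P_r$ via the explicit inequality $\mathbf 1^\top x\le dBn$. Closedness of $C_r$ is not required, because $K$ is only used through these valid inequalities together with the count bounds.
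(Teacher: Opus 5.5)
Your proposal is correct and matches the paper's proof step for step. You check the count constraints first. For $n_r=0$, $z_r\ne0$ you use a capacity-type cut: $\mathbf 1^\top x-dBn\le0$ where the paper uses $s^\top x-Bn\le0$, and both are valid. For $n_r>0$ you homogenize the cut returned by \cref{lem:piece-separation}, and you absorb the $d^d$ calls into $2^{O(d^3)}$.

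A few minor slips do not affect the argument. The encoding length of $z_r/n_r$ is $O(d\ell)$ rather than $O(\ell)$, which is harmless since $d\le\ell$. Your closing remark is also off: $C_r$ is in fact closed because $P_r$ is bounded, and closedness of $K$ is needed for \cref{lem:oracle-ip} to apply.
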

\begin{proof}
Given a rational vector $((z_r,n_r))_r$, first check the count constraints
and return any violated constraint, orienting an equality as a separating
inequality.
We may now assume that $n_r\ge0$ and $z_r\ge0$ for every $r$.

If $n_r=0$, the cone constraint holds exactly when $z_r=0$.
For $z_r\ne0$, return $s^\top z_r-Bn_r\le0$.
This inequality is valid on $C_r$ because every configuration has total size
at most $B$, and it is violated since $z_r\ge0$, $z_r\ne0$, and $s_i\ge1$.
If $n_r>0$, use \cref{lem:piece-separation} to test $z_r/n_r\in P_r$.
Given a separating inequality $u^\top x\le\alpha$, return
$u^\top z_r-\alpha n_r\le0$.
This inequality is valid on $C_r$ and violated by the query since $n_r>0$.
If all checks pass, every cone constraint holds, so the query belongs to $K$.

There are at most $d^d$ calls to \cref{lem:piece-separation}, each on
a vector of encoding length $\ell^{O(1)}$.
Together with $(D+L+\ell)^{O(1)}$ additional bit operations and
$D=(d+1)d^d$, this gives the claimed running time.
Padding a returned inequality with zeros outside its block gives output
length $(D+L+\ell)^{O(1)}$.
\end{proof}

\begin{proof}[Proof of \cref{thm:main}]
Apply \cref{lem:oracle-ip} to $K$ with $n=D$,
$T=1+\max\{b,a_1,\ldots,a_d\}$, and $\sigma=D+L+\enc(T)$, using
\cref{lem:aggregate-separation}.
This decides whether $K\cap\Z^D$ is nonempty and hence, by
\cref{lem:packing-to-ilp,lem:ilp-to-packing}, decides \textsc{Bin Packing}.
The oracle call and query-length bounds, together with the separation cost,
give $2^{O(D^3)}(D+L)^{O(1)}$ bit operations.
Since $D=(d+1)d^d$, the running time is $O^*(2^{d^{O(d)}})$.
\end{proof}

\section*{Acknowledgements}
Tomohiro Koana was supported in part by JST CREST Grant Number JPMJCR24Q2 and JST ERATO Grant Number JPMJER2301.

\section*{Declaration of generative AI use}
ChatGPT 6 Astra generated the proof of \cref{thm:main} and was also used to
draft the manuscript.
The authors verified and revised the proof and the manuscript and take full
responsibility for the paper.

\bibliographystyle{binpacking}
\begingroup
\raggedright
\bibliography{references}
\endgroup
\end{document}